\newtheorem{theo}{\textbf{Theorem}}
\newtheorem{corl}{\textbf{Corollary}}
\newtheorem{lem}{\textbf{Lemma}}
\theoremstyle{definition}
\newtheorem{remk}{Remark}
\newcommand*{\Scale}[2][4]{\scalebox{#1}{$#2$}}
\newcommand\numberthis{\addtocounter{equation}{1}\tag{\theequation}}
\begin{document}
\title{On the Communication Latency of Wireless Decentralized Learning} 


\author{%
  \IEEEauthorblockN{Navid Naderializadeh}
  \IEEEauthorblockA{Intel Labs, Santa Clara, CA \\ nn245@cornell.edu}
}


\maketitle

\begin{abstract}
We consider a wireless network comprising $n$ nodes located within a circular area of radius $R$, which are participating in a decentralized learning algorithm to optimize a global objective function using their local datasets. To enable gradient exchanges across the network, we assume each node communicates only with a set of neighboring nodes, which are within a distance $R n^{-\beta}$ of itself, where $\beta\in(0,\frac{1}{2})$. We use tools from network information theory and random geometric graph theory to show that the communication delay for a single round of exchanging gradients on all the links throughout the network scales as $\mathcal{O}\left(\frac{n^{2-3\beta}}{\beta\log n}\right)$, increasing (at different rates) with both the number of nodes and the gradient exchange threshold distance.
\end{abstract}


\section{Introduction}
With the advent of novel powerful computing platforms, alongside the availability of large-scale datasets, machine learning (ML), and particularly, deep learning, have gained significant interest in recent years~\cite{Goodfellow-et-al-2016}. Such developments have also contributed to the invention of more advanced ML architectures and more efficient training mechanisms~\cite{lecun2015deep}, which have resulted in state-of-the-art performance in many domains, such as computer vision~\cite{krizhevsky2012imagenet}, natural language processing~\cite{sutskever2014sequence}, health-care~\cite{leung2014deep}, etc.

More recently, however, there has been an increasing awareness in consumers of services, which are driven by ML models, regarding the privacy of their data. Depending on how sensitive the data type is or how often it is collected, each user has their own privacy concerns and preferences~\cite{soups}. Such trends have been coincident with the proliferation of mobile computing solutions, which provide devices, such as smart-home devices, cell phones, laptops, and drones, with strong computation capabilities~\cite{poushter2016smartphone}.

These societal and technical trends have given rise to paradigms such as federated and decentralized learning, where the generated data by each device stays on-board to protect its privacy~\cite{mcmahan2016communication,kamp2018efficient}. To compensate for that, (part of) the computation is also shifted to be done locally at the end-user devices. It has been shown that in many cases, distributing the learning process over different nodes incurs negligible performance loss compared to centralized training approaches~\cite{zhang2013communication}.

However, one major bottleneck in all the aforementioned paradigms is the communication network between the learning nodes. As the data points generated by each node differ from the rest of the rest of the network, the nodes need to periodically communicate with each other so that they all converge to the same model, rather than diverging to completely different models. If the communication that needs to occur between the nodes in the network induces sizeable delays, it can significantly lengthen the convergence time across the network, as it can totally dominate the computation delay at the learning nodes.

This phenomenon has motivated a massive body of recent work on dealing with the communication delays for federated and decentralized learning. In~\cite{chen2018lag}, a setting with a single server and multiple worker nodes is considered, where at each iteration, a subset of worker nodes are selected, either by the server or by the worker nodes themselves, to send their gradients to the server. In~\cite{scaman2018optimal}, a simple network of multiple worker nodes is considered, over which they can all exchange their computation results with a fixed amount of delay, in conjunction with a server which aggregates all the results and sends back updated parameters to the worker nodes. In~\cite{koloskova2019decentralized}, gossiping algorithms and convergence guarantees are provided for decentralized optimization with compressed communication. In~\cite{wang2019matcha}, it is shown how specific connectivity of the communication network topology among learning nodes affects the speed of convergence. In~\cite{basu2019qsparse}, convergence results are derived for a combination of quantization, sparsification and local computation in a distributed computation setting with a single master and multiple worker nodes. In~\cite{reisizadeh2019robust}, a deadline-based approach for minibatch gradient computing at each computing node is proposed, such that the minibatch size is adaptive to the computation capabilities of each node, hence making the scheme robust to stragglers.

Most of the above works deal with an abstract model for the communication network among the learning nodes. One particularly interesting communication paradigm to consider is wireless communication, especially as operators around the world roll out their 5G network infrastructure. There have been some recent works that have considered wireless constraints, mostly in the context of federated learning~\cite{amiri2019machine,ahn2019wireless,zeng2019energy,yang2019scheduling,amiri2020update}.

In this paper, we consider the decentralized learning scenario over a network of learning nodes connected together through a shared wireless medium. Considering the nature of wireless networks, in which nodes in proximity can more efficiently communicate with each other, while interfering at concurrent transmissions, we attempt to characterize the communication delay for exchanging the gradients among the learning nodes over the wireless network topology. In particular, we consider a setting similar to~\cite{wang2019matcha}, where at each time, a set of non-interfering gradient exchanges are scheduled to happen simultaneously. Using the results on the optimality of treating interference as noise in interference networks~\cite{geng2015optimality}, we present an algorithm for gradient exchanges in wireless decentralized learning akin to the information-theoretic link scheduling that was proposed in~\cite{naderializadeh2014itlinq} for the case of device-to-device networks.

We utilize tools from random geometric graph theory to characterize the asymptotic communication latency for exchanging gradients in the aforementioned decentralized setting framework. In particular, we consider a network of $n$ learning nodes located within a circle of radius $R$, where each node exchanges gradients with its neighboring nodes, which are within a distance $R n^{-\beta}$ of itself, where $\beta\in(0,\frac{1}{2})$ is a variable that controls the density of the gradient exchange topology. This threshold distance needs to decrease with $n$, as the entire network needs to remain connected to guarantee the convergence of the decentralized learning algorithm. We show that as $n\rightarrow\infty$, the communication latency scales as $\mathcal{O}\left(\frac{n^{2-3\beta}}{\beta\log n}\right)$, increasing with the number of users, and decreasing with $\beta$. This result provides insights on how much communication time is needed in a wireless decentralized learning scenario, where more gradient exchanges leads to longer communication latencies, but faster convergence rates.


\section{System Model}\label{sec:model}
Consider a wireless network consisting of $n$ nodes $[n] \triangleq \{1,2,...,n\}$ dropped uniformly at random within a circular area of radius $R$.
Assume that each node $i \in [n]$ has access to a set of data points $\mathcal{D}_i$, and the goal is to minimize a global loss function $f$, defined over a set of optimization parameters $\boldsymbol{w}\in\mathbb{R}^d$, using the overall dataset across the network as
\begin{align*}
\min_{\boldsymbol{w}\in\mathbb{R}^d} f(\boldsymbol{w})
&= \min_{\boldsymbol{w}\in\mathbb{R}^d}\frac{1}{n} \sum_{i=1}^n  f_i(\boldsymbol{w})\\
&= \min_{\boldsymbol{w}\in\mathbb{R}^d}\frac{1}{n} \sum_{i=1}^n
\mathbb{E}_{\mathbf{x} \sim \mathcal{D}_i}\left[l(\boldsymbol{w};\mathbf{x})\right],
\end{align*}
where $f_i(\boldsymbol{w})$ is the local loss function at node $i\in[n]$, and $l(\boldsymbol{w};\mathbf{x})$ is the stochastic loss function for sample $\mathbf{x}$ given model parameters $\boldsymbol{w}$. In order to solve this problem, decentralized stochastic gradient descent (SGD) can be utilized to minimize the objective function in an iterative fashion. In decentralized SGD, the system is run over multiple iterations, where at each iteration, each node performs a local computation of the gradient of the objective function with respect to the set of optimization parameters $\boldsymbol{w}$ over (a minibatch of) its local dataset, following which the gradients are exchanged among nodes prior to the beginning of the next iteration.

Due to the path-loss and fading effects in wireless communications, nodes can more easily communicate to their closer neighbors than farther ones. Therefore, we define the \emph{communication graph} as the network topology which dictates how nodes exchange gradients with their neighboring nodes, and we model it as an undirected random geometric graph (RGG) $G_{\mathsf{comm}}=(\mathcal{V}_{\mathsf{comm}}, \mathcal{E}_{\mathsf{comm}})$, where $\mathcal{V}_{\mathsf{comm}}=[n]$ is the set of all nodes in the network, and for every $i,j\in \mathcal{V}_{\mathsf{comm}}$, where $i \neq j$, $(i,j)\in \mathcal{E}_{\mathsf{comm}}$ if and only if 
$D_{ij} \leq D_{\mathsf{comm}}$, where $D_{ij}$ denotes the distance between nodes $i$ and $j$, and $D_{\mathsf{comm}}$ is the \emph{threshold distance} for gradient exchange; i.e., two nodes can exchange their gradients with each other if and only if they are located within a distance of at most $D_{\mathsf{comm}}$.

However, activating multiple gradient exchanges over the wireless channel at the same time will lead to interference, which can significantly reduce the network performance in terms of the throughput, and therefore, the communication delay. To capture the interference among concurrent wireless transmissions, we also define a \emph{conflict graph} $G_{\mathsf{conf}}=(\mathcal{V}_{\mathsf{conf}}, \mathcal{E}_{\mathsf{conf}})$. In this graph, each vertex represents a communication link in the original communication graph, i.e., $\mathcal{V}_{\mathsf{conf}}=\mathcal{E}_{\mathsf{comm}}$. Moreover, there is an edge between two vertices in $\mathcal{V}_{\mathsf{conf}}$ if their activations are in conflict; i.e., if transmitting data (i.e., gradients) on those links at the same time \emph{strongly interfere} on each other. Since the level of interference also depends on the distance of transmitting/receiving nodes, we introduce a \emph{conflict distance} $D_{\mathsf{conf}}$, where for two vertices $(i_1,j_1), (i_2,j_2) \in \mathcal{V}_{\mathsf{conf}}$, there is an edge between $(i_1,j_1)$ and $(i_2,j_2)$, i.e., $((i_1,j_1), (i_2,j_2)) \in \mathcal{E}_{\mathsf{conf}}$, if and only if
\begin{align*}
\min\{D_{i_1,i_2}, D_{i_1,j_2}, D_{j_1,i_2}, D_{j_1,j_2} \} \leq D_{\mathsf{conf}},
\end{align*}
which implies that at least one node in $(i_1,j_1)$ is within conflict distance of $(i_2,j_2)$. Note that for the case of $i_1=i_2=i$, $D_{i_1,i_2}=0$, implying that there is a conflict between $(i,j_1)$ and $(i,j_2)$, for any two neighbors $j_1,j_2$ of node $i$ in the original communication graph. This means that a node cannot communicate with two nodes at the same time (i.e., half-duplex and single frequency band constraints).

Given the above definitions, our goal is to determine the asymptotic behavior of the normalized gradient exchange latency $\delta$ (as $n\rightarrow\infty$), which is defined as the delay for completing the exchange of 1 bit of gradients on all the links of the communication graph. Assuming that the communication delay in the network dominates the gradient computation delay at each node, the normalized gradient exchange latency $\delta$ characterizes the wall-clock run time per iteration for decentralized SGD on a wireless communication network of learning nodes.


\subsection{Wireless Communication Model}
We assume each node is equipped with a single transmit/receive antenna, and all transmissions happen in a synchronous time-slotted manner on a single frequency band. We restrict the transmission strategies to an on/off pattern: At each time slot, a node either transmits a message to another node with full power $P$ or stays completely silent. We use $\mu_i(t)\in\{0,1\}$ as a transmission status indicator of node $i$ at time slot $t$; i.e., $\mu_i(t)=1$ if and only if node $i$ is transmitting with full power at time slot $t$. On the receiver side, we adopt the simple and practical scheme of treating interference as noise (TIN), where each node decodes its desired message, while treating the interference from all other concurrent transmissions as noise. Letting $N$ denote the noise variance, the rate achieved on a link from node $i$ to node $j$ at time $t$ can be written as
\begin{align}\label{eq:rate1}
R_{ij}(t) = \frac{\mu_i(t) \cdot P \cdot G_{ij}}{\sum_{k\in [n]\setminus\{i,j\}} \mu_k(t) \cdot P \cdot G_{kj} + N},
\end{align}
where $G_{ij}$ denotes the channel gain on the link between nodes $i$ and $j$. In this paper, we adopt a single-slope path-loss model for the channel gains, where the channel gain at distance $D$ can be written as
\begin{align*}
G(D) = G_0 D^{-\alpha},
\end{align*}
where $G_0$ is the reference channel gain at a distance of $1m$, and $\alpha \geq 2$ denotes the path-loss exponent. This implies that the achievable rate in \eqref{eq:rate1} can be written as
\begin{align*}
R_{ij}(t) &= \frac{\mu_i(t) \cdot P \cdot G_0 \cdot D_{ij}^{-\alpha}} {\sum_{k\in [n]\setminus\{i,j\}} \mu_k(t) \cdot P \cdot G_0 \cdot D_{kj}^{-\alpha} + N}\\
&= \frac{\mu_i(t)\gamma D_{ij}^{-\alpha}} {\sum_{k\in [n]\setminus\{i,j\}} \mu_k(t) \gamma D_{kj}^{-\alpha} + 1},
\end{align*}
where $\gamma\triangleq \frac{P \cdot G_0}{N}$ denotes the signal-to-noise ratio (SNR) at a distance of $1m$.

\section{Forming the Communication and Conflict Graphs}
The communication network topology needs to be carefully designed, as decentralized SGD will not converge if the gradient exchange communication graph is disconnected~\cite{wang2019matcha}. We resort to the following lemma, which provides a sufficient condition for connectivity of random geometric graphs.
\begin{lem}[Corollary 3.1 in~\cite{gupta1999critical}]\label{lem:RGGconnect}
In an RGG with $n$ nodes and a threshold distance of $r(n)$, the graph is connected with probability one (as $n\rightarrow\infty$) if $\pi r^2(n)=\frac{\log(n) + c(n)}{n}$, where $c(n)\overset{n\uparrow}{\longrightarrow}\infty$.\footnote{In this paper, we use the short-hand notation $\log(\cdot)$ to denote the natural logarithm operation $\log_e(\cdot)$.}
\end{lem}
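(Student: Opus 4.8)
The plan is to show that $\Pr[G_{\mathsf{comm}} \text{ is disconnected}] \to 0$ as $n \to \infty$, after rescaling the domain to have unit area. The strategy rests on the classical structural fact that, in the regime $\pi r^2(n) = \Theta\!\left(\frac{\log n}{n}\right)$, the dominant obstruction to connectivity is an \emph{isolated} vertex; so the proof splits into (i) ruling out isolated vertices with high probability, and (ii) showing that the absence of isolated vertices forces connectivity with high probability.

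For (i), I would apply the first-moment method to the number $X_n$ of isolated vertices. By linearity of expectation, $\mathbb{E}[X_n] = n \int_A \bigl(1 - |B(x,r) \cap A|\bigr)^{n-1}\, dx$, where $A$ is the domain and $|B(x,r) \cap A|$ is the area a node at $x$ can reach. For $x$ in the bulk this area equals $\pi r^2 = \frac{\log n + c(n)}{n}$, so the bound $1 - u \le e^{-u}$ gives a bulk contribution of at most $n\, e^{-(n-1)\pi r^2} = e^{-c(n) + o(\log n)} \to 0$. Nodes within distance $r$ of the boundary are more likely to be isolated, but the boundary strip has area $O(r) = O\!\left(\sqrt{\log n / n}\right)$, and a separate first-moment estimate (or, following Gupta--Kumar, first proving the statement on a boundaryless domain such as the sphere and then transferring it) shows the boundary contribution is also $o(1)$. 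Hence $\mathbb{E}[X_n] \to 0$, and Markov's inequality yields $\Pr[X_n \ge 1] \to 0$.

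For (ii), suppose $G_{\mathsf{comm}}$ is disconnected with a connected component $\mathcal{C}$ of size $1 \le |\mathcal{C}| \le n/2$; then the width-$r$ annulus around $\mathcal{C}$ contains no node outside $\mathcal{C}$. I would bound the probability of such a well-separated component, summed over component sizes $k = 2, \dots, n/2$ (the case $k = 1$ is exactly (i)) and over feasible locations, and show it is $o(1)$: a small component must sit in a small region that is unlikely to hold $k$ nodes while its surrounding annulus is empty, whereas a large component forces an improbably large empty region. A convenient packaging is the tessellation argument --- partition $A$ into square cells of side at most $r/\sqrt 2$, observe that a configuration in which every cell is nonempty is automatically connected (nodes in adjacent, including diagonally adjacent, cells are within distance $r$), and verify by a first-moment bound that in the $\frac{\log n}{n}$ regime every cell is nonempty with high probability.

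I expect the main obstacle to be obtaining the \emph{sharp} constant in (ii): a crude tessellation loses a constant factor in front of $\frac{\log n}{n}$, so matching the stated threshold $\pi r^2 = \frac{\log n + c(n)}{n}$ exactly requires the delicate component-exclusion estimate of Gupta--Kumar and careful treatment of components meeting the boundary. I note, however, that sharpness is inessential here: the communication graph later uses $r(n) \propto n^{-\beta}$ with $\beta \in (0, \tfrac12)$, for which $\pi r^2(n)\, n \propto n^{1-2\beta}$ exceeds $\log n$ by a polynomial factor, so any $\Theta(\log n / n)$ connectivity threshold would already suffice.
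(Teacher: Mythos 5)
The paper does not prove this statement at all --- it is imported verbatim as Corollary 3.1 of the cited Gupta--Kumar reference --- so there is no internal proof to compare against; what you have written is a reconstruction of the standard proof of that external result (isolated-vertex first moment plus exclusion of larger well-separated components), and in outline it is the right argument. Two caveats. First, your tessellation constant is wrong as stated: with square cells of side $r/\sqrt{2}$ only points in the \emph{same} cell are guaranteed within distance $r$; two points in orthogonally adjacent cells can be as far apart as $\sqrt{5}\cdot r/\sqrt{2}>r$, and diagonally adjacent ones up to $2r$. You need side at most $r/\sqrt{5}$ for orthogonal adjacency (or $r/(2\sqrt{2})$ if you want king-move adjacency), which only changes constants. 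Second, and as you yourself flag, the tessellation route cannot reach the sharp threshold $\pi r^2(n)=\frac{\log n + c(n)}{n}$: each cell then has area a fixed fraction below $\frac{\log n}{n}$, so the union bound over $\Theta(n/\log n)$ cells does not vanish, and proving the lemma exactly as stated requires the full Gupta--Kumar component-exclusion and boundary analysis, which your sketch defers to the reference rather than supplies. Your closing observation is correct and worth making explicit: in this paper $D_{\mathsf{comm}}=Rn^{-\beta}$ with $\beta\in(0,\tfrac{1}{2})$, so $n r^2(n)=\Theta(n^{1-2\beta})\gg\log n$, and then the crude cell argument (with the corrected cell size) already gives connectivity with probability $1-e^{-\Theta(n^{1-2\beta})}$; so for the role the lemma plays here, your elementary route is fully sufficient even though it does not reproduce the sharp constant of the quoted corollary.
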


In light of Lemma~\ref{lem:RGGconnect}, for the communication graph, we set the gradient exchange threshold distance as
\begin{align}\label{eq:def_Dcomm}
D_{\mathsf{comm}} = R n^{-\beta}, ~\beta\in(0,\tfrac{1}{2}),
\end{align}
which decreases as the number of nodes increases so as to satisfy the condition in Lemma~\ref{lem:RGGconnect}, hence maintaining the connectivity of the entire graph.

Now, to build the conflict graph, we use the following result, derived in~\cite{geng2015optimality}, for approximate information-theoretic optimality of TIN in wireless networks.

\begin{theo}[Theorem 4 in~\cite{geng2015optimality}]\label{thm:TIN}
Consider a wireless network with $K$ transmitter-receiver pairs $\left\{\left( \mathsf{Tx}_i,\mathsf{Rx}_i \right)\right\}_{i=1}^K$, where $\mathsf{SNR}_i$ denotes the signal-to-noise ratio between $\mathsf{Tx}_i$ and $\mathsf{Rx}_i$, and $\mathsf{INR}_{ij}$ denotes the interference-to-noise ratio between $\mathsf{Tx}_i$ and $\mathsf{Rx}_j$. Then, under the following condition,
\begin{align*}
\mathsf{SNR}_i \geq \mathsf{INR}_{ij} \cdot \mathsf{INR}_{li}, \qquad \forall i\in[K], \forall j,l\in[K]\setminus\{i\},
\end{align*}
TIN achieves the entire information-theoretic capacity region of the network (as defined in \cite{geng2015optimality}) to within a gap of $\log_2 3K$ per dimension.
\end{theo}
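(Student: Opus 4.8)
The plan is to sandwich the capacity region between a TIN-based achievable region and a genie-aided outer region and to show that the hypothesis $\mathsf{SNR}_i \ge \mathsf{INR}_{ij}\cdot\mathsf{INR}_{li}$ forces the two to agree to within a constant. It is cleanest to first prove the analogous statement in the generalized-degrees-of-freedom (GDoF) / deterministic abstraction of the channel, where ``within a constant gap'' sharpens to ``exactly'', establish there that TIN with power control is GDoF-optimal, and then transport the argument to the Gaussian channel, where each entropy/mutual-information manipulation loses at most $O(\log K)$ bits per real dimension.

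\emph{Achievability.} Let $\mathsf{Tx}_i$ use an i.i.d.\ Gaussian codebook with power $P_i \le P$ and let $\mathsf{Rx}_i$ decode its own message treating the aggregate of the other codewords as noise, giving $R_i \le \log\bigl(1 + \frac{P_i G_{ii}}{\sum_{k\ne i}P_k G_{ki}+N}\bigr)$. The decisive step is to restrict to a monomial power-control family $P_i = P\,\mathsf{SNR}^{-r_i}$ with $r_i\ge 0$ and change variables from $(r_i)_{i\in[K]}$ to the rate vector. Under this substitution the union of per-user rate vectors contains a polyhedron whose non-redundant facets are indexed by the directed cycles $i_1\to i_2\to\dots\to i_m\to i_1$ of the ``channel-strength digraph'' on $[K]$: each cycle yields (up to additive constants) a constraint $\sum_k R_{i_k} \le \sum_k \log\bigl(\mathsf{SNR}_{i_k}/\mathsf{INR}_{i_{k-1} i_k}\bigr)$, alongside the single-user bounds $R_i \le \log(1+\mathsf{SNR}_i)$. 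That all of these can be satisfied simultaneously reduces, by LP duality / a potential-function argument, to a certain auxiliary weighted digraph having no negative-weight cycle, and this is precisely where $\mathsf{SNR}_i \ge \mathsf{INR}_{ij}\cdot\mathsf{INR}_{li}$ is invoked.

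\emph{Converse.} The bounds $R_i \le \log(1+\mathsf{SNR}_i)$ are immediate from Fano's inequality, since deleting all interference only helps $\mathsf{Rx}_i$. The cycle bounds are the substantive part: for a directed cycle $i_1\to\dots\to i_m\to i_1$, give $\mathsf{Rx}_{i_k}$ a genie built from a suitably scaled, noise-perturbed copy of the preceding transmitter's codeword $\mathbf{X}_{i_{k-1}}^n$, write $nR_{i_k} \le I(\mathbf{X}_{i_k}^n;\mathbf{Y}_{i_k}^n,\text{genie}) + n\epsilon_n$, and expand each term so that the differential-entropy contributions of the interference at consecutive receivers become equal up to constants and telescope around the cycle. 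Every invocation of a ``Gaussian maximizes differential entropy'' or ``entropy of a scaled mixture'' estimate costs only a constant; summing over the at most $K$ edges of the cycle and over the at most $K$ users yields the claimed $\log_2 3K$-per-dimension gap. Under the TIN-optimality hypothesis the resulting outer bounds match the achievable polyhedron to within a constant, so the two sandwich the capacity region.

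\emph{Main obstacle.} The converse is the hard part, and its crux is the choice of genie: it must be strong enough that the residual interference entropy at $\mathsf{Rx}_{i_k}$ becomes comparable to the useful-signal entropy at $\mathsf{Rx}_{i_{k-1}}$, so these terms cancel in the telescoping sum, yet weak enough that the leftover slack is $O(\log K)$ rather than rate-order; it is exactly $\mathsf{SNR}_i \ge \mathsf{INR}_{ij}\cdot\mathsf{INR}_{li}$ that makes the cross-links weak enough for this to succeed. A secondary difficulty is the combinatorial lemma linking feasibility of the power-control LP to the ``no negative cycle'' condition on the auxiliary digraph and showing that the hypothesis implies it. The remaining ingredients --- Gaussian-codebook analysis, Fano, and the GDoF-to-Gaussian lift --- are routine.
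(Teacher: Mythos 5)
This theorem is not proved in the paper at all: it is imported verbatim (as ``Theorem 4 in~\cite{geng2015optimality}'') and used as a black box to justify the choice of $D_{\mathsf{conf}}$, so there is no in-paper proof to compare your attempt against. Judged against the original source, your outline does track the actual strategy of Geng et al.: TIN achievability with a power-control family whose achievable GDoF region is a polyhedron cut out by single-user bounds and directed-cycle bounds, a potential-function/LP-duality argument showing the TIN condition makes that polyhedron attainable, and a genie-aided converse in which side information built from the preceding transmitter on each cycle makes the interference-entropy terms telescope, with constant-gap bookkeeping lifting the GDoF statement to the Gaussian capacity region.

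That said, what you have written is a proof plan, not a proof. The steps that carry all the difficulty are exactly the ones left as assertions: the precise genie (its scaling and added noise) and why the telescoping cancellation works, the combinatorial lemma that the hypothesis $\mathsf{SNR}_i \geq \mathsf{INR}_{ij}\cdot\mathsf{INR}_{li}$ rules out negative-weight cycles in the auxiliary digraph, and above all the quantitative accounting that produces the specific $\log_2 3K$ per-dimension gap --- ``each estimate costs only a constant'' does not by itself yield that number. If your task was to supply a self-contained proof of the stated theorem, these gaps would need to be filled; if the task was to reproduce the role the theorem plays in this paper, the honest answer is that the paper itself simply cites it, and only its Corollary 1 consequence ($\mathsf{INR}_{\max}\leq\sqrt{\mathsf{SNR}_{\min}}$ suffices for TIN optimality) is actually used downstream.
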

Theorem \ref{thm:TIN} immediately leads to the following corollary.
\begin{corl}\label{corl:TIN}
In a network with $K$ transmitter-receiver pairs, if the minimum SNR and the maximum INR across the whole network (denoted by $\mathsf{SNR}_{\min}$ and $\mathsf{INR}_{\max}$, respectively) satisfy $\mathsf{INR}_{\max} \leq \sqrt{\mathsf{SNR}_{\min}}$, then TIN is information-theoretically optimal to within a gap of $\log_2 3K$ per dimension.
\end{corl}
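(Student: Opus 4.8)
The plan is to reduce the statement to a direct application of Theorem~\ref{thm:TIN} by verifying its hypothesis under the simpler, uniform assumption $\mathsf{INR}_{\max}\leq\sqrt{\mathsf{SNR}_{\min}}$. First, I would fix an arbitrary index $i\in[K]$ and arbitrary $j,l\in[K]\setminus\{i\}$ (allowing the degenerate case $j=l$), and observe that, by definition of the network-wide maximum INR, $\mathsf{INR}_{ij}\leq\mathsf{INR}_{\max}$ and $\mathsf{INR}_{li}\leq\mathsf{INR}_{\max}$, so that $\mathsf{INR}_{ij}\cdot\mathsf{INR}_{li}\leq\mathsf{INR}_{\max}^2$.

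Next, I would invoke the hypothesis $\mathsf{INR}_{\max}\leq\sqrt{\mathsf{SNR}_{\min}}$, which upon squaring gives $\mathsf{INR}_{\max}^2\leq\mathsf{SNR}_{\min}$; chaining this with the trivial bound $\mathsf{SNR}_{\min}\leq\mathsf{SNR}_i$ yields $\mathsf{SNR}_i\geq\mathsf{INR}_{ij}\cdot\mathsf{INR}_{li}$. Since $i$, $j$, and $l$ were arbitrary, the TIN-optimality condition required by Theorem~\ref{thm:TIN} holds throughout the network, and the conclusion --- that TIN is information-theoretically optimal to within a gap of $\log_2 3K$ per dimension --- follows immediately.

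There is essentially no obstacle here: the only point requiring a moment's care is to ensure that the chain of inequalities covers every instance of the quantified condition in Theorem~\ref{thm:TIN}, i.e., that the pointwise bounds $\mathsf{INR}_{ij},\mathsf{INR}_{li}\leq\mathsf{INR}_{\max}$ and $\mathsf{SNR}_i\geq\mathsf{SNR}_{\min}$ all hold simultaneously for every admissible triple $(i,j,l)$ --- which they do by the very definitions of $\mathsf{INR}_{\max}$ and $\mathsf{SNR}_{\min}$. Hence the corollary is simply a convenient weakening of the theorem's hypothesis into a single scalar inequality that will be easier to check against the path-loss and random-geometric-graph bounds developed in the sequel.
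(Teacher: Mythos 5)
Your verification is correct and matches the paper's (implicit) argument: the paper presents the corollary as following immediately from Theorem~\ref{thm:TIN}, precisely because $\mathsf{INR}_{ij}\cdot\mathsf{INR}_{li}\leq\mathsf{INR}_{\max}^2\leq\mathsf{SNR}_{\min}\leq\mathsf{SNR}_i$ for every admissible triple, which is exactly the chain you wrote out.
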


As mentioned in Section~\ref{sec:model}, the received power at distance $D$ can be written as $P G_0 D^{-\alpha}$. Hence, given the RGG nature of the communication and conflict graphs, we can bound the SNR and INR values across the network as
\begin{align}
\mathsf{SNR}_{\min} &\geq \frac{P G_0 D_{\mathsf{comm}}^{-\alpha}}{N} = \gamma D_{\mathsf{comm}}^{-\alpha},\label{eq:SNRbound}\\
\mathsf{INR}_{\max} &\leq \frac{P G_0 D_{\mathsf{conf}}^{-\alpha}}{N} = \gamma D_{\mathsf{conf}}^{-\alpha}.\label{eq:INRbound}
\end{align}
Therefore, \eqref{eq:SNRbound}-\eqref{eq:INRbound} together with Corollary~\ref{corl:TIN} imply that a sufficient condition for the optimality of TIN for exchanging the gradients is
\begin{align*}
\gamma D_{\mathsf{conf}}^{-\alpha} &\leq \sqrt{\gamma D_{\mathsf{comm}}^{-\alpha}}\\
\Leftrightarrow D_{\mathsf{conf}} &\geq \gamma^{\frac{1}{2\alpha}} \sqrt{D_{\mathsf{comm}}}.
\end{align*}
Thus, to guarantee the optimality of TIN, while having the sparsest conflict graph, we set the conflict distance as
\begin{align}\label{eq:def_Dconf}
D_{\mathsf{conf}} = \gamma^{\frac{1}{2\alpha}} \sqrt{D_{\mathsf{comm}}} = \gamma^{\frac{1}{2\alpha}} \sqrt{R} n^{-\beta/2}.
\end{align}

\section{Main Result}

In this section, we present our main result on the time needed for exchanging gradients over the communication graph as follows.
\begin{theo}\label{thm:main}
For a sufficiently large network of learning nodes ($n \rightarrow \infty$), the normalized gradient exchange latency satisfies
\begin{align}\label{eq:main_thm}
\delta &< \frac{1 + \frac{2\gamma^{\frac{1}{\alpha}}}{R} n^{2-3\beta}}{\log_2\left( 1 + 2\sqrt{\gamma}R^{-\frac{\alpha}{2}} n^{\frac{\alpha\beta}{2} + 2\beta - 2} \left(1 + \frac{2\gamma^{\frac{1}{\alpha}}}{R} n^{2-3\beta}\right) \right)}.
\end{align}
\end{theo}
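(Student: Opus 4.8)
The plan is to design an explicit treating-interference-as-noise (TIN) link-scheduling policy, to bound the number of transmission blocks it uses through a proper coloring of the conflict graph $G_{\mathsf{conf}}$, to lower bound the rate every scheduled link attains in its block, and to multiply the two. Since two vertices of $G_{\mathsf{conf}}$ are non-adjacent exactly when the corresponding links do not strongly interfere, any independent set of $G_{\mathsf{conf}}$ can be activated simultaneously, every one of its links being decoded via TIN. I would color $G_{\mathsf{conf}}$ \emph{equitably} (Hajnal--Szemer\'edi) into $M\le\Delta(G_{\mathsf{conf}})+1$ color classes, each of size at most $\lceil|\mathcal{E}_{\mathsf{comm}}|/M\rceil$, and assign one transmission block to each class, of length just enough for each of its links to convey one bit at its guaranteed rate $r_c$. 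This gives $\delta\le\sum_{c=1}^{M}1/r_c\le M/\min_{c}r_c$, so it remains to upper bound $M$ and lower bound $\min_c r_c$.

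For the rate, fix a class $c$ and a scheduled link $(i,j)$. Every other active transmitter $k$ belongs to a link non-adjacent to $(i,j)$ in $G_{\mathsf{conf}}$, so $D_{jk}>D_{\mathsf{conf}}$ and its contribution to the interference-to-noise ratio at $j$ is below $\gamma D_{\mathsf{conf}}^{-\alpha}$; moreover at most $\lceil|\mathcal{E}_{\mathsf{comm}}|/M\rceil-1$ such interferers are present. Since $D_{ij}\le D_{\mathsf{comm}}$, the SINR of $(i,j)$ is at least $\frac{\gamma D_{\mathsf{comm}}^{-\alpha}}{1+(\lceil|\mathcal{E}_{\mathsf{comm}}|/M\rceil-1)\gamma D_{\mathsf{conf}}^{-\alpha}}$, so $r_c$ is at least $\log_2$ of one plus this ratio, uniformly in $c$. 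The crucial simplification is the defining choice \eqref{eq:def_Dconf}, $D_{\mathsf{conf}}=\gamma^{1/(2\alpha)}\sqrt{D_{\mathsf{comm}}}$, which makes $\gamma D_{\mathsf{conf}}^{-\alpha}=\sqrt{\gamma D_{\mathsf{comm}}^{-\alpha}}$: the SINR bound then reads $x^{2}/(1+bx)$ with $x=\gamma D_{\mathsf{conf}}^{-\alpha}$ and $b$ the interferer count, which for large $n$ behaves like $x/b\approx 2M\sqrt{\gamma D_{\mathsf{comm}}^{-\alpha}}\,n^{2\beta-2}=2\sqrt{\gamma}\,R^{-\alpha/2}n^{\alpha\beta/2+2\beta-2}M$, matching the argument of the logarithm in \eqref{eq:main_thm}.

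What remains is to control $|\mathcal{E}_{\mathsf{comm}}|$ and $M$ by random-geometric-graph concentration. As $n\to\infty$, the number of nodes inside a disk of radius $\rho$ concentrates tightly around its mean $n(\rho/R)^{2}$ whenever $n(\rho/R)^{2}=\omega(\log n)$; this applies to $\rho=D_{\mathsf{comm}}$ (mean of order $n^{1-2\beta}$) and to $\rho=D_{\mathsf{conf}}$ (mean of order $\gamma^{1/\alpha}R^{-1}n^{1-\beta}$) since $\beta<\tfrac12$. Union-bounding over the $n$ node-centered disks gives, with probability tending to one, $\Delta(G_{\mathsf{comm}})\lesssim n^{1-2\beta}$, whence the handshake identity yields $|\mathcal{E}_{\mathsf{comm}}|\le\tfrac12 n\,\Delta(G_{\mathsf{comm}})\lesssim\tfrac12 n^{2-2\beta}$; and since a link conflicts only with links incident to a node lying within $D_{\mathsf{conf}}$ of one of its two endpoints, $\Delta(G_{\mathsf{conf}})\le 2\,\Delta(G_{\mathsf{comm}})\cdot(\text{largest node count in a disk of radius }D_{\mathsf{conf}})\lesssim\tfrac{2\gamma^{1/\alpha}}{R}n^{2-3\beta}$, so $M\le 1+\tfrac{2\gamma^{1/\alpha}}{R}n^{2-3\beta}$. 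Feeding these into $\delta\le M/\min_c r_c$ and into the SINR bound of the previous paragraph, and using that $x\mapsto 1/\log_2(1+x)$ is decreasing, collapses the whole expression to the right-hand side of \eqref{eq:main_thm}.

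The main obstacle I anticipate is this last interface between the geometric/combinatorial estimates and the closed form: the maximum disk-occupancy concentration — hence the bounds on $\Delta(G_{\mathsf{comm}})$, $|\mathcal{E}_{\mathsf{comm}}|$, and $\Delta(G_{\mathsf{conf}})$ — has to be made sharp enough, handling carefully the nodes near $\partial B(0,R)$ whose degrees are depressed, that the constants reproduce \eqref{eq:main_thm} exactly, all while keeping each rounding ($\lceil\cdot\rceil$), the ``$1+$'' inside the SINR, and the equitable-coloring size bound oriented so that the stated strict inequality is preserved.
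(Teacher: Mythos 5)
Your scheme reaches the stated bound through essentially the same architecture as the paper: TIN-decodable independent sets of $G_{\mathsf{conf}}$ activated by time-sharing, a per-set rate bound driven by $D_{ij}\le D_{\mathsf{comm}}$, interferers farther than $D_{\mathsf{conf}}$, the identity $\gamma D_{\mathsf{conf}}^{-\alpha}=\sqrt{\gamma D_{\mathsf{comm}}^{-\alpha}}$ from \eqref{eq:def_Dconf} (this is Lemma~\ref{lem:rsym}), and a color count of order $1+\frac{2\gamma^{1/\alpha}}{R}n^{2-3\beta}$ (Lemma~\ref{lem:bound_chrom}). You diverge in two places. Where the paper takes an arbitrary greedy coloring and controls the spread of class sizes by Jensen's inequality applied to the concave map $x\mapsto 1/\log_2\bigl(1+\sqrt{\gamma D_{\mathsf{comm}}^{-\alpha}}/x\bigr)$ (Appendix~\ref{appx:concave}), you invoke Hajnal--Szemer\'edi to force every class to have size at most $\lceil|\mathcal{E}_{\mathsf{comm}}|/M\rceil$ outright; this replaces the convexity argument by a worst-class bound, at the cost of a heavier combinatorial theorem. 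And where the paper bounds $\Delta_{\mathsf{conf}}$ through clique numbers of $RGG(n,2D_{\mathsf{conf}})$ and $RGG(n,2D_{\mathsf{comm}})$ and the asymptotics of Theorem~\ref{thm:rgg_omega}, you use direct disk-occupancy concentration (valid here since the relevant means $n^{1-\beta}$ and $n^{1-2\beta}$ are polynomially large, so maxima are $(1+o(1))$ times means); the constants agree, and your edge count $|\mathcal{E}_{\mathsf{comm}}|\le\frac12 n\,\Delta_{\mathsf{comm}}$ matches the paper's use of the asymptotic average degree.

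One step as written is not justified. After you arrive at $\delta\le M/\log_2(1+CM)$ with $C=2\sqrt{\gamma}R^{-\frac{\alpha}{2}}n^{\frac{\alpha\beta}{2}+2\beta-2}$, substituting the upper bound $M\le 1+\frac{2\gamma^{1/\alpha}}{R}n^{2-3\beta}$ raises the numerator but also raises the argument of the logarithm, so the substitution requires the monotonicity of $M\mapsto M/\log_2(1+CM)$ --- precisely what the paper proves in Appendix~\ref{appx:monotone}. The fact you cite, that $x\mapsto 1/\log_2(1+x)$ is decreasing, only licenses replacing $|\mathcal{E}_{\mathsf{comm}}|$ by its upper bound, not $M$. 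The repair is easy: either reproduce the short derivative computation of Appendix~\ref{appx:monotone}, or, more cleanly within your own framework, note that Hajnal--Szemer\'edi yields an equitable coloring with any prescribed number of colors $k\ge\Delta_{\mathsf{conf}}+1$, so you may schedule with $k$ equal to the bound of Lemma~\ref{lem:bound_chrom} itself; then the numerator and the class-size bound involve the same quantity and no monotonicity lemma is needed.
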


\begin{remk}
Theorem~\ref{thm:main} implies that the normalized gradient exchange latency can be upper-bounded in an order-wise fashion (for $n\rightarrow\infty$) as
\begin{align}\label{eq:thm_orderwise}
\delta < \mathcal{O}\left(\frac{n^{2-3\beta}}{\beta\log n}\right).
\end{align}
\end{remk}

Theorem~\ref{thm:main} characterizes an achievable normalized gradient exchange latency over the communication graph. Figure~\ref{fig:thm} demonstrates how this latency changes with $n$ and $\beta$ for the case where nodes are dropped within a circular area of radius $100$m, transmit power is assumed to be $30$dBm, noise power spectral density is taken to be $-174$dBm/Hz, the bandwidth is $10$MHz, the path-loss exponent is equal to $2$, and the reference channel gain is set to $G_0 = 10^{-7}$. As demonstrated by~\eqref{eq:main_thm} and its order-wise approximation in~\eqref{eq:thm_orderwise}, as well Figure~\ref{fig:thm}, the delay of exchanging gradients over all links in the conflict graph monotonically increases with $n$, which is expected as increasing the network size, while keeping the communication graph connected, will require an increasing number of gradient exchanges among neighboring nodes.

\begin{figure}[t]
\center
\includegraphics[trim=1.5in .6in .23in 1.05in, clip, width=0.485\textwidth]{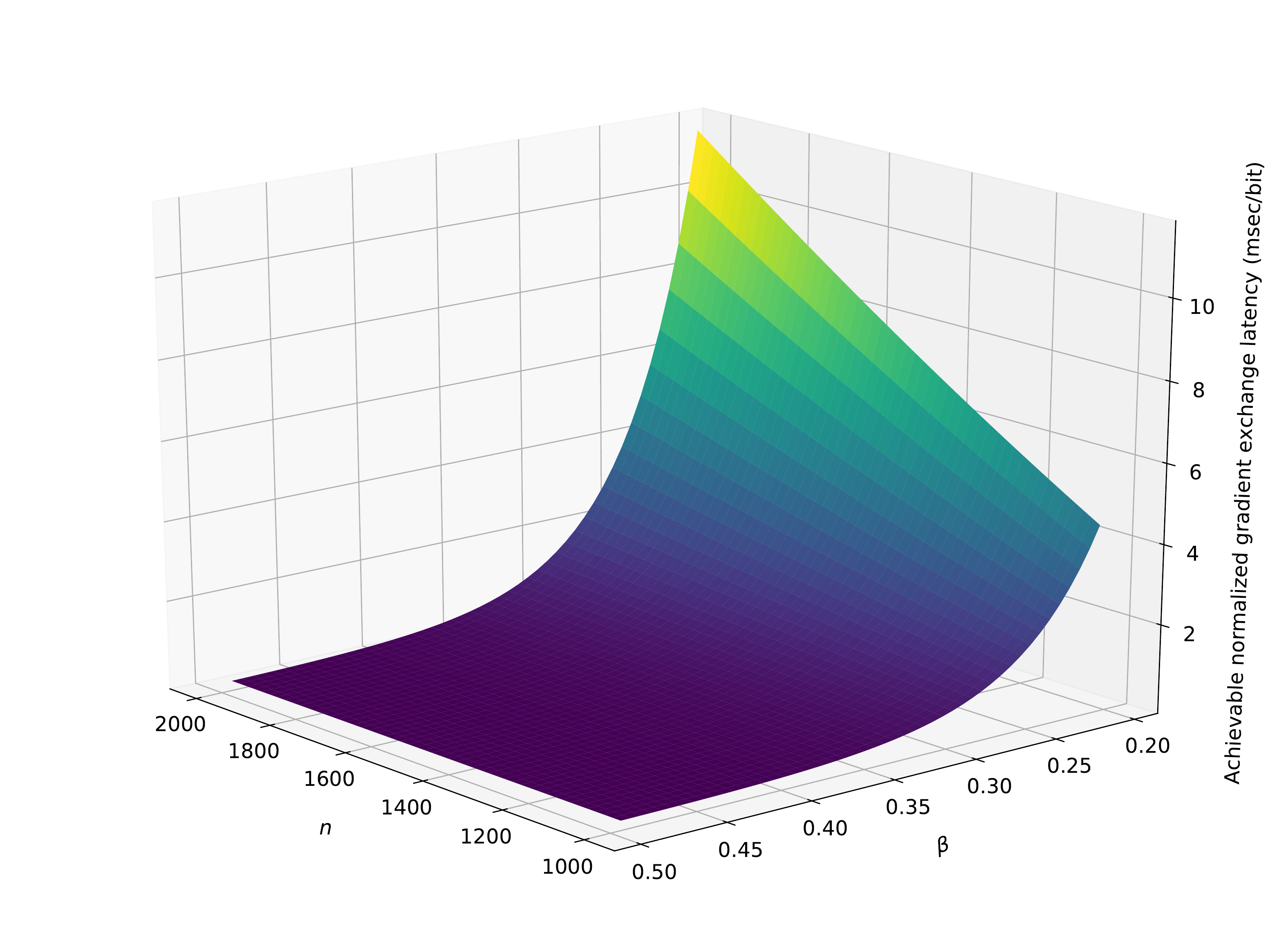}
\caption{Variations of the achievable normalized gradient exchange latency for a network with $1000-2000$ nodes dropped in a circular area of radius $100$m.}
\label{fig:thm}
\end{figure}

On the other hand, the latency decreases (approximately) exponentially with $\beta$. As per~\eqref{eq:def_Dcomm}, $\beta$ determines the threshold distance for gradient exchange among adjacent nodes; Increasing $\beta$ will reduce the number of neighbors with which each node exchanges gradients, and this provides a significant saving in terms of the communication latency. Note that this comes at the expense of slower convergence rate for the global loss function, as it will take longer for each node to obtain access to the gradients from datasets available in farther nodes.

\section{Achievable Scheme}

In this section, we prove our main result in Theorem~\ref{thm:main} by providing an achievable scheme for gradient exchange on all links in the communication graph and characterizing an upper bound on its achievable normalized gradient exchange latency.

Given the communication and conflict graphs, the nodes can exchange gradients with their neighbors in the communication graph as long as their exchanges are non-conflicting; i.e., there is not an edge between them in the conflict graph. This leads to the notion of independent sets on the conflict graph, where each such independent set contains a set of nodes such that there is no edge between them. This is closely related to the notion of information-theoretic independent sets as defined in~\cite{naderializadeh2014itlinq} for device-to-device communication networks. It is also analog to the concept of matchings on the communication topology as considered in~\cite{wang2019matcha}, where now the interference between active communication links is also taken into account.

We first start with the following lemma, in which we characterize a lower bound on the symmetric rate within an independent set of the conflict graph, defined as the rate that can be simultaneously achieved by all the corresponding active links in the communication graph.

\begin{lem}\label{lem:rsym}
For any independent set $\mathcal{S} \subseteq \mathcal{V}_{\mathsf{conf}}$ in $G_{\mathsf{conf}}$, the symmetric rate is lower-bounded by
\begin{align}
R_{\mathsf{sym}, \mathcal{S}} > \log_2\left( 1 + \frac{\sqrt{\gamma D_{\mathsf{comm}}^{-\alpha}}}{|\mathcal{S}|}\right).
\end{align}
\end{lem}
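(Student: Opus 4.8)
The plan is to uniformly lower-bound the signal-to-interference-plus-noise ratio (SINR) seen by every active link in the independent set $\mathcal{S}$, and then pass through $\log_2(1+\cdot)$. Fix an arbitrary link $(i,j)\in\mathcal{S}$. Since $\mathcal{V}_{\mathsf{conf}}=\mathcal{E}_{\mathsf{comm}}$, this link is an edge of the communication graph, so $D_{ij}\le D_{\mathsf{comm}}$ and the received signal power obeys $\gamma D_{ij}^{-\alpha}\ge\gamma D_{\mathsf{comm}}^{-\alpha}$. For the interference, I would first use the half-duplex structure of $G_{\mathsf{conf}}$ (the $D_{i_1,i_2}=0$ observation in Section~\ref{sec:model}): any two links sharing an endpoint conflict, so the $2|\mathcal{S}|$ endpoints of the links in $\mathcal{S}$ are distinct; in particular, the only active transmitters besides $i$ are the transmitters of the remaining $|\mathcal{S}|-1$ links, and for each such link $(k,l)\in\mathcal{S}\setminus\{(i,j)\}$ we have $D_{kj}>D_{\mathsf{conf}}$ — otherwise $(k,l)$ and $(i,j)$ would be adjacent in $G_{\mathsf{conf}}$, as $D_{jk}$ is one of the four distances in its conflict condition. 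Summing, the interference-plus-noise power at $j$ is strictly below $(|\mathcal{S}|-1)\gamma D_{\mathsf{conf}}^{-\alpha}+1$, so that
\[
\mathsf{SINR}_{ij}>\frac{\gamma D_{\mathsf{comm}}^{-\alpha}}{(|\mathcal{S}|-1)\,\gamma D_{\mathsf{conf}}^{-\alpha}+1}.
\]

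Next I would substitute the conflict distance from \eqref{eq:def_Dconf}, which was chosen so that $\gamma D_{\mathsf{conf}}^{-\alpha}=\sqrt{\gamma D_{\mathsf{comm}}^{-\alpha}}$. Abbreviating $\theta\triangleq\gamma D_{\mathsf{comm}}^{-\alpha}$, the bound reads $\mathsf{SINR}_{ij}>\theta/\bigl((|\mathcal{S}|-1)\sqrt{\theta}+1\bigr)$. Because $D_{\mathsf{comm}}=Rn^{-\beta}\to 0$ as $n\to\infty$, we have $\theta\ge 1$ for all sufficiently large $n$, hence $(|\mathcal{S}|-1)\sqrt{\theta}+1\le|\mathcal{S}|\sqrt{\theta}$ and therefore $\mathsf{SINR}_{ij}>\sqrt{\theta}/|\mathcal{S}|$. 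Applying $\log_2(1+\cdot)$ gives $R_{ij}>\log_2\!\bigl(1+\sqrt{\gamma D_{\mathsf{comm}}^{-\alpha}}/|\mathcal{S}|\bigr)$, and since this bound is independent of the choice of $(i,j)\in\mathcal{S}$ it also lower-bounds $R_{\mathsf{sym},\mathcal{S}}=\min_{(i,j)\in\mathcal{S}}R_{ij}$, which is the claim.

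The argument is mostly bookkeeping, so I do not expect a genuine obstacle; the steps requiring care are (i) pinning the interferer count at exactly $|\mathcal{S}|-1$, which relies on the half-duplex (shared-endpoint) edges of $G_{\mathsf{conf}}$, and (ii) the monotone inequality $(|\mathcal{S}|-1)\sqrt{\theta}+1\le|\mathcal{S}|\sqrt{\theta}$, which needs $\sqrt{\theta}\ge 1$ and is where the large-$n$ regime (equivalently an SNR condition like $D_{\mathsf{comm}}\le\gamma^{1/\alpha}$) enters. The conceptual heart is the identity $\gamma D_{\mathsf{conf}}^{-\alpha}=\sqrt{\gamma D_{\mathsf{comm}}^{-\alpha}}$, i.e.\ the $\mathsf{INR}_{\max}\le\sqrt{\mathsf{SNR}_{\min}}$ design built into \eqref{eq:def_Dconf}: it forces each interferer to contribute only on the order of $\sqrt{\mathrm{SNR}}$ rather than $\mathrm{SNR}$, which is exactly what lets a growing independent set retain a positive symmetric rate.
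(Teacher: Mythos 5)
Your proposal is correct and follows essentially the same route as the paper's proof: lower-bound each link's SINR using $D_{ij}\le D_{\mathsf{comm}}$ for the signal and $D_{lj}>D_{\mathsf{conf}}$ for each of the $|\mathcal{S}|-1$ interferers, then use $\gamma D_{\mathsf{conf}}^{-\alpha}=\sqrt{\gamma D_{\mathsf{comm}}^{-\alpha}}$ from \eqref{eq:def_Dconf} and the large-$n$ regime to absorb the noise term and obtain $\log_2\bigl(1+\sqrt{\gamma D_{\mathsf{comm}}^{-\alpha}}/|\mathcal{S}|\bigr)$. Your added bookkeeping (distinct endpoints via the shared-endpoint conflicts, and the explicit condition $\sqrt{\gamma D_{\mathsf{comm}}^{-\alpha}}\ge 1$) only makes explicit what the paper leaves implicit.
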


\begin{proof}
For every vertex $(i,j)\in\mathcal{S}$, the achievable rate on the corresponding link from node $i$ to node $j$ in $G_{\mathsf{comm}}$ can be written as
\begin{align*}
R_{ij} &= \log_2(1+\mathsf{SINR}_{ij}) \\
& = \log_2\left( 1 + \frac{\mathsf{SNR}_{ij}}{1 + \sum_{l\neq i : \exists m \text{ s.t. } (l,m)\in\mathcal{S}} \mathsf{INR}_{lj}} \right)\\
& = \log_2\left( 1 + \frac{\gamma D_{ij}^{-\alpha}}{1 + \sum_{l\neq i : \exists m \text{ s.t. } (l,m)\in\mathcal{S}} \gamma D_{lj}^{-\alpha}} \right)\\
& > \log_2\left( 1 + \frac{\gamma D_{\mathsf{comm}}^{-\alpha}}{1 + \sum_{l\neq i : \exists m \text{ s.t. } (l,m)\in\mathcal{S}} \gamma D_{\mathsf{conf}}^{-\alpha}} \right)\numberthis\label{eq:distbound}\\
& = \log_2\left( 1 + \frac{\gamma D_{\mathsf{comm}}^{-\alpha}}{1 + (|\mathcal{S}| - 1) \gamma D_{\mathsf{conf}}^{-\alpha}} \right)\\
&\overset{n\uparrow}{>} \log_2\left( 1 + \frac{\sqrt{\gamma D_{\mathsf{comm}}^{-\alpha}}}{|\mathcal{S}|}\right),\numberthis\label{eq:tin_noise}
\end{align*}
where \eqref{eq:distbound} follows from the fact that link $(i,j)$ is present in the communication graph, hence their distance satisfies $D_{ij} \leq D_{\mathsf{comm}}$, while the link between nodes $(i,j)$ and $(l,m)$ is not present in the conflict graph, implying that $D_{lj} > D_{\mathsf{conf}}$. Moreover, \eqref{eq:tin_noise} follows from the definition of $D_{\mathsf{conf}}$ in~\eqref{eq:def_Dconf}, and from the fact that as $n\rightarrow\infty$, the interference grows larger than noise, i.e., $D_{\mathsf{conf}}^{-\alpha}\gg1$. As all nodes $(i,j) \in \mathcal{S}$ are able to achieve this communication rate, the proof is complete.
\end{proof}

Next, we present the following lemma, which provides an upper bound on the chromatic number of the conflict graph.
\begin{lem}\label{lem:bound_chrom}
The chromatic number $\chi_{\mathsf{conf}}$ of the conflict graph can be asymptotically upper-bounded by
\begin{align*}
\chi_{\mathsf{conf}} \leq 1 + \frac{2 \gamma^{\frac{1}{\alpha}}}{R} n^{2-3\beta}.
\end{align*}
\end{lem}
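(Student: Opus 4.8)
The plan is to bound the chromatic number of the conflict graph $G_{\mathsf{conf}}$ by bounding its maximum degree $\Delta_{\mathsf{conf}}$ and invoking the classical greedy bound $\chi_{\mathsf{conf}} \le \Delta_{\mathsf{conf}} + 1$. So the real work is to upper-bound, for each link $(i,j) \in \mathcal{V}_{\mathsf{conf}}$, the number of other links $(i',j')$ that conflict with it. A link $(i',j')$ conflicts with $(i,j)$ exactly when at least one of $i',j'$ lies within distance $D_{\mathsf{conf}}$ of at least one of $i,j$. Hence every conflicting link has an endpoint inside the region $B(i,D_{\mathsf{conf}}) \cup B(j,D_{\mathsf{conf}})$, a union of two disks of radius $D_{\mathsf{conf}}$; since $(i,j)$ is a communication link, $D_{ij}\le D_{\mathsf{comm}}$, and as $D_{\mathsf{conf}} = \gamma^{1/(2\alpha)}\sqrt{R}\,n^{-\beta/2} \gg D_{\mathsf{comm}} = Rn^{-\beta}$ for large $n$, this union is contained in a single disk of radius, say, $3D_{\mathsf{conf}}$ (or $2D_{\mathsf{conf}}+D_{\mathsf{comm}}$) centred at $i$.

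First I would count the number of nodes $N_{\mathsf{conf}}$ falling in such a disk: with $n$ nodes dropped uniformly in a disk of radius $R$, the expected count in a disk of radius $cD_{\mathsf{conf}}$ is $n\cdot (cD_{\mathsf{conf}}/R)^2 = c^2 \gamma^{1/\alpha} n^{1-\beta}$, and a concentration argument (or the asymptotic-almost-sure statements available from the RGG literature) lets us treat this as the deterministic node count up to constants as $n\to\infty$. Second, each such node is an endpoint of at most $N_{\mathsf{comm}}$ communication links, where $N_{\mathsf{comm}}$ is the number of nodes within the communication radius $D_{\mathsf{comm}}$ of a given node; the same uniform-drop computation gives $N_{\mathsf{comm}} \approx n (D_{\mathsf{comm}}/R)^2 = n^{1-2\beta}$. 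Multiplying, the number of links touching the disk — hence $\Delta_{\mathsf{conf}}$ up to constants — is $\mathcal{O}(n^{1-\beta}\cdot n^{1-2\beta}) = \mathcal{O}(n^{2-3\beta})$, and tracking the constants carefully (the radius-$3D_{\mathsf{conf}}$ disk contributes a factor $9\gamma^{1/\alpha}$, but dividing by $2$ for double-counting each link's two endpoints, and being slightly more careful with the two-disk union rather than a crude enclosing disk) should yield precisely the claimed $\Delta_{\mathsf{conf}} \le \frac{2\gamma^{1/\alpha}}{R} n^{2-3\beta}$, so that $\chi_{\mathsf{conf}} \le \Delta_{\mathsf{conf}}+1 \le 1 + \frac{2\gamma^{1/\alpha}}{R}n^{2-3\beta}$.

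The main obstacle is getting the constant exactly right rather than merely order-wise: a naive enclosing disk of radius $2D_{\mathsf{conf}} + D_{\mathsf{comm}}$ gives a constant like $4\gamma^{1/\alpha}$ after halving, not $2\gamma^{1/\alpha}$, so one must argue more tightly — e.g., observe that every conflicting link has an endpoint within $D_{\mathsf{conf}}$ of $\{i,j\}$, count nodes in $B(i,D_{\mathsf{conf}})\cup B(j,D_{\mathsf{conf}})$ (area at most $2\pi D_{\mathsf{conf}}^2$, giving $\approx 2\gamma^{1/\alpha}n^{1-\beta}$ nodes), multiply by the per-node communication degree, and absorb lower-order terms and the "$-1$" from excluding $(i,j)$ itself into the vanishing correction as $n\to\infty$. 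A secondary point to handle carefully is the transition from expected node counts to statements that hold with probability one in the limit; this is where Lemma~\ref{lem:RGGconnect}-style RGG concentration is invoked, and one should state explicitly that the bound is asymptotic (as the lemma already does with "asymptotically upper-bounded").
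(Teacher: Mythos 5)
Your route is the same as the paper's: bound the degree of a conflict-graph vertex $(i,j)$ by (number of nodes within $D_{\mathsf{conf}}$ of $i$ or of $j$) times (maximum number of communication links per such node), then apply the greedy bound $\chi_{\mathsf{conf}}\le 1+\Delta_{\mathsf{conf}}$; this is exactly the factorization in \eqref{eq:prod_maxdegrees}. Your ``tighter'' variant (count nodes in $B(i,D_{\mathsf{conf}})\cup B(j,D_{\mathsf{conf}})$ rather than one enclosing disk) is precisely how the paper avoids the constant blow-up you worried about, by splitting the count into the two terms of \eqref{eq:prod_maxdegrees}.

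The genuine gap is the probabilistic step. You compute \emph{expected} counts for a fixed node or pair and then appeal to ``a concentration argument,'' pointing at Lemma~\ref{lem:RGGconnect}; but that lemma is a connectivity statement and gives no control over node counts, and what the bound actually requires is control of the \emph{maximum} over all vertices of the conflict graph (equivalently, maximum degrees of the underlying RGGs), holding almost surely as $n\to\infty$ --- an expectation plus a per-vertex argument does not deliver a uniform bound. The paper closes exactly this hole with two citable facts: $\Delta_{RGG(n,r)}\le \omega_{RGG(n,2r)}$ (equation (4) of~\cite{decreusefond:hal-00864303}) and the clique-number asymptotics of Theorem~\ref{thm:rgg_omega} from~\cite{mcdiarmid2011chromatic}, applicable here because $n(2D_{\mathsf{conf}})^2=4R\gamma^{\frac{1}{\alpha}}n^{1-\beta}$ and $n(2D_{\mathsf{comm}})^2=4R^2n^{1-2\beta}$ both dominate $\log n$ (since $1-\beta,\,1-2\beta>0$); these give $\omega_{RGG(n,2D_{\mathsf{conf}})}\approx\frac{\gamma^{1/\alpha}}{R}n^{1-\beta}$ and $\omega_{RGG(n,2D_{\mathsf{comm}})}\approx n^{1-2\beta}$, hence \eqref{eq:deg_bound_final1}. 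In this dense regime the rigorous maximum-degree/clique asymptotics do coincide with your expected-count heuristic up to $1+o(1)$, so your constant is recoverable --- but only after you also fix a small algebra slip: $n\bigl(cD_{\mathsf{conf}}/R\bigr)^2=c^2\gamma^{\frac{1}{\alpha}}n^{1-\beta}/R$, i.e., the factor $1/R$ coming from $D_{\mathsf{conf}}^2=\gamma^{\frac{1}{\alpha}}Rn^{-\beta}$ is dropped in your intermediate count (your final statement reinstates it). With the correct concentration tool substituted for the appeal to Lemma~\ref{lem:RGGconnect}, your argument becomes the paper's proof.
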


\begin{proof}
Considering each vertex $(i,j)$ in the conflict graph, its degree can be upper-bounded as
\begin{align}
deg_{(i,j)} &\leq |\{(k,l) \in \mathcal{V}_{\mathsf{conf}} : D_{ik} \leq D_{\mathsf{conf}}\}| \nonumber \\
&\qquad + |\{(k,l) \in \mathcal{V}_{\mathsf{conf}} : D_{jk} \leq D_{\mathsf{conf}}\}|\nonumber\\
&=|\{l: D_{kl} \leq D_{\mathsf{comm}}\}| \cdot \nonumber \\
&\qquad\Big(|\{k: D_{ik} \leq D_{\mathsf{conf}}\}| \nonumber\\
&\qquad + |\{k: D_{jk} \leq D_{\mathsf{conf}}\}|\Big)\nonumber\\
&\leq 2 \Delta_{RGG(n,D_\mathsf{conf})} \Delta_{\mathsf{comm}},\label{eq:prod_maxdegrees}
\end{align}
where $\Delta_{RGG(n,D_\mathsf{conf})}$ is the maximum degree of a random geometric graph with $n$ nodes and threshold distance of $D_\mathsf{conf}$, and $\Delta_{\mathsf{comm}}$ is the maximum degree of $G_{\mathsf{comm}}$, which is a random geometric graph with $n$ nodes and threshold distance of $D_\mathsf{comm}$. As per equation (4) in~\cite{decreusefond:hal-00864303}, \eqref{eq:prod_maxdegrees} can be upper bounded by
\begin{align}\label{eq:prod_cliques}
deg_{(i,j)} &\leq 2 \cdot \omega_{RGG(n,2D_\mathsf{conf})} \cdot \omega_{RGG(n,2D_\mathsf{comm})},
\end{align}
where $\omega_{RGG(n,r)}$ denotes the clique number of a random geometric graph with $n$ nodes and threshold distance $r$, defined as the size of the largest clique in the graph, i.e., the maximal subset of vertices in which every two vertices are connected. 

Now, we can leverage the bounds in the following theorem from~\cite{mcdiarmid2011chromatic} on the clique number of random geometric graphs to upper bound~\eqref{eq:prod_cliques}.
\begin{theo}[Theorem 1.2 in~\cite{mcdiarmid2011chromatic}]\label{thm:rgg_omega}
For a $d$-dimensional random geometric graph with $n$ nodes and threshold distance $r \overset{n\uparrow}{\longrightarrow} 0$, if $\frac{\ln n}{nr^d}\overset{n\uparrow}{\longrightarrow} 0$, then its clique number, denoted by $\omega_{RGG(n,r)}$, satisfies
\begin{align*}
\frac{\omega_{RGG(n,r)}}{\frac{\emph{vol}(B)}{2^d}\sigma nr^d} \overset{n\uparrow}{\longrightarrow} 1,
\end{align*}
where $B$ is the unit ball in $\mathbb{R}^d$ and $\sigma$ is the maximum density of the distribution of nodes in $\mathbb{R}^d$. For Euclidean distance in $\mathbb{R}^2$ and uniform distribution of nodes within a circle of radius $R$, $\text{vol}(B) = \pi$ and $\sigma = \frac{1}{\pi R^2}$.
\end{theo}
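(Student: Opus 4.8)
My plan is to establish the two matching bounds $\omega_{RGG(n,r)} \ge (1-o(1))\,c_d\,\sigma n r^d$ and $\omega_{RGG(n,r)} \le (1+o(1))\,c_d\,\sigma n r^d$, where $c_d := \frac{\mathrm{vol}(B)}{2^d}$ is the volume of a ball of radius $\tfrac12$. The guiding principle is that $c_d r^d$ is exactly the volume of a ball of radius $r/2$, and that, by the \emph{isodiametric inequality}, the ball of radius $r/2$ is the maximum-volume set of diameter at most $r$ in $\mathbb{R}^d$. Since a set of nodes forms a clique precisely when it is contained in a region of diameter $\le r$ (namely its convex hull), the densest such region can hold $\sigma n\cdot c_d r^d$ nodes in expectation, and this turns out to be sharp.

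For the lower bound I would exhibit a single clique of the target size. Let $\sigma$ be the essential supremum of the nodes' density $\rho$ and fix $\epsilon>0$; the superlevel set $\{\rho > (1-\epsilon)\sigma\}$ has positive measure, so by the Lebesgue density theorem it has a density point $x_0$. Because $r\to 0$, for large $n$ the ball $B(x_0,r/2)$ captures probability mass at least $(1-\epsilon)^2\sigma c_d r^d$, and every node falling in it lies within distance $r$ of every other, hence forms a clique. The number of such nodes is Binomial with mean $\ge (1-\epsilon)^2\sigma c_d n r^d$, which diverges because the hypothesis $\ln n/(nr^d)\to 0$ forces $nr^d\to\infty$; a Chernoff bound then produces a clique of size $\ge (1-\epsilon)^3\sigma c_d n r^d$ with high probability, and letting $\epsilon\to0$ yields the lower bound.

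For the upper bound I would discretize space by a grid of cells of side $\eta r$ for a small constant $\eta$. Each cell contains a Binomial number of nodes with mean at most $\sigma n(\eta r)^d$; since $nr^d\gg\ln n$, this mean dominates the logarithm of the (polynomially many) cells, so a Chernoff bound combined with a union bound shows that, with high probability, every cell holds at most $(1+\epsilon)\sigma n(\eta r)^d$ nodes. Any clique $C$ lies in its convex hull $S$, a set of diameter $\le r$, so $C$ is covered by the cells meeting $S$; these cells are disjoint and contained in the $\eta r\sqrt d$-neighborhood $S^{+}$ of $S$, so their total volume is at most $\mathrm{vol}(S^{+})$. Applying the isodiametric inequality to $S^{+}$, whose diameter is at most $r(1+2\eta\sqrt d)$, bounds this volume by $c_d\,r^d(1+2\eta\sqrt d)^d$, giving $|C|\le (1+\epsilon)\sigma n\,c_d r^d(1+2\eta\sqrt d)^d$. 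Sending $\eta,\epsilon\to0$ produces the matching upper bound.

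The main obstacle is the upper bound, and specifically securing the \emph{sharp} constant $c_d=\mathrm{vol}(B)/2^d$ rather than a looser one. A naive argument that merely covers each clique by an enclosing ball would invoke Jung's radius $r\sqrt{d/\bigl(2(d+1)\bigr)}>r/2$ and overshoot the constant. The key is to decouple the two roles of the discretization: the grid of side $\eta r$ serves only as a cheap, polynomial-size device for a uniform cell-occupancy bound, while the \emph{geometric} content — that the diameter constraint forces the occupied volume down to that of a radius-$r/2$ ball — is supplied entirely by the isodiametric inequality applied to the clique's convex hull. The delicate point is to manage the two limits jointly, letting the cell size $\eta$ shrink slowly enough that the per-cell mean $\eta^d nr^d$ stays far above $\ln n$ (which is exactly where the hypothesis $\ln n/(nr^d)\to0$ enters, guaranteeing the union bound closes), yet keeping the boundary correction $(1+2\eta\sqrt d)^d$ tending to $1$.
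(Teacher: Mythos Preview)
The paper does not supply its own proof of this statement: it is quoted as Theorem~1.2 of \cite{mcdiarmid2011chromatic} and used as a black box inside the proof of Lemma~\ref{lem:bound_chrom}. So there is no in-paper argument to compare your proposal against.

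That said, your outline is essentially the standard proof one finds in McDiarmid's paper (and in Penrose's monograph on random geometric graphs). The lower bound via a single ball of radius $r/2$ centred at a Lebesgue density point of the superlevel set $\{\rho>(1-\epsilon)\sigma\}$, together with a Chernoff bound exploiting $nr^d\to\infty$, is exactly the canonical construction. For the upper bound you correctly identify the crucial geometric input: the isodiametric inequality, applied to the convex hull of the clique, is what delivers the sharp constant $\mathrm{vol}(B)/2^d$; a covering-ball argument using Jung's radius would overshoot. Your two-scale device---a coarse grid of side $\eta r$ for concentration, followed by the isodiametric bound on the inflated hull---and your remark that $\eta$ must shrink slowly enough to keep $\eta^d nr^d\gg\ln n$ are precisely the standard manoeuvres.

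One minor point you leave implicit: your Chernoff and union-bound steps, as written, yield convergence with high probability. McDiarmid's statement is almost-sure convergence, so to match it you would need to observe that the failure probabilities are exponentially small in $\eta^d nr^d\gg\ln n$, hence summable, and invoke Borel--Cantelli. This is routine but worth stating.
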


For the graph $RGG(n,2D_\mathsf{conf})$, we have $n(2D_\mathsf{conf})^2 \overset{\eqref{eq:def_Dconf}}{=} 4n(\gamma^{\frac{1}{2\alpha}} \sqrt{R} n^{-\beta/2})^2 = 4R \gamma^{\frac{1}{\alpha}} n^{1-\beta}$. Given the fact that $1-\beta\in(\frac{1}{2},1)$, we can invoke Theorem~\ref{thm:rgg_omega} to (almost-surely) continue~\eqref{eq:prod_cliques} as
\begin{align}
deg_{(i,j)} &\leq 2
\cdot \left(\frac{n (2D_\mathsf{conf})^2}{4R^2}\right)
\cdot \omega_{RGG(n,2D_\mathsf{comm})}\nonumber \\
&= \frac{2 \gamma^{\frac{1}{\alpha}}}{R} n^{1-\beta} \cdot \omega_{RGG(n,2D_\mathsf{comm})}.\label{eq:deg_bound_omega_dcomm_left}
\end{align}
Furthermore, for the graph $RGG(n,2D_\mathsf{comm})$, we have $n(2D_\mathsf{comm})^2 \overset{\eqref{eq:def_Dcomm}}{=} 4n(R n^{-\beta})^2=4R^2 n^{1-2\beta}$, and since $1-2\beta\in(0,1)$, we can again use Theorem~\ref{thm:rgg_omega} to continue~\eqref{eq:deg_bound_omega_dcomm_left} as
\begin{align}
deg_{(i,j)} &\leq \frac{2 \gamma^{\frac{1}{\alpha}}}{R} n^{1-\beta} \cdot \left(\frac{n (2D_\mathsf{comm})^2}{4R^2}\right)\nonumber\\
&=\frac{2 \gamma^{\frac{1}{\alpha}}}{R} n^{2-3\beta}.\label{eq:deg_bound_final1}
\end{align}

Using a greedy coloring algorithm on the conflict graph, its chromatic number can be upper bounded by $1 + \Delta_{\mathsf{conf}}$, where $\Delta_{\mathsf{conf}}$ is the maximum degree of the vertices in $G_{\mathsf{conf}}$. Combined with~\eqref{eq:deg_bound_final1}, this completes the proof.
\end{proof}

Having Lemmas~\ref{lem:rsym} and~\ref{lem:bound_chrom}, we now proceed to prove Theorem~\ref{thm:main}. Suppose that we have a proper coloring on the conflict graph with $\chi_{\mathsf{conf}}$ colors, where the independent set corresponding to each color $k\in\{1,...,\chi_{\mathsf{conf}}\}$ is denoted by $\mathcal{S}_k$. Then, assuming that all independent sets use time-sharing to exchange the gradients, we can bound the normalized gradient exchange latency as
\begin{align}\label{eq:delay1}
\delta = \sum_{k=1}^{\chi_{\mathsf{conf}}} \frac{1}{R_{\mathsf{sym}, \mathcal{S}_k}}.
\end{align}
Now, we can leverage Lemma~\ref{lem:rsym} to upper bound \eqref{eq:delay1} as
\begin{align*}
\delta &< \sum_{k=1}^{\chi_{\mathsf{conf}}} \frac{1}{\log_2\left( 1 + \frac{\sqrt{\gamma D_{\mathsf{comm}}^{-\alpha}}}{|\mathcal{S}_k|}\right)} \\
&= \chi_{\mathsf{conf}} \sum_{k=1}^{\chi_{\mathsf{conf}}} \frac{1}{\chi_{\mathsf{conf}}} \frac{1}{\log_2\left( 1 + \frac{\sqrt{\gamma D_{\mathsf{comm}}^{-\alpha}}}{|\mathcal{S}_k|}\right)}\\
&= \chi_{\mathsf{conf}} \sum_{k=1}^{\chi_{\mathsf{conf}}} \frac{1}{\chi_{\mathsf{conf}}} g(|\mathcal{S}_k|),\numberthis\label{eq:delta_f_lb}
\end{align*}
where $g(\cdot)$ is defined as
\begin{align}\label{eq:f_def}
g(x) := \frac{1}{\log_2\left( 1 + \frac{\sqrt{\gamma D_{\mathsf{comm}}^{-\alpha}}}{x}\right)}.
\end{align} 
It can be shown that $g(x)$ is concave in $x$ for $x>0$ (see Appendix~\ref{appx:concave}). Therefore, using Jensen's inequality, we can upper-bound \eqref{eq:delta_f_lb} as
\begin{align}
\delta &< \chi_{\mathsf{conf}}  f\left(\frac{1}{\chi_{\mathsf{conf}}} \sum_{k=1}^{\chi_{\mathsf{conf}}} |\mathcal{S}_k|\right)\nonumber\\
&= \frac{\chi_{\mathsf{conf}}}{\log_2\left( 1 + \frac{\sqrt{\gamma D_{\mathsf{comm}}^{-\alpha}}}{\frac{1}{\chi_{\mathsf{conf}}}\sum_{k=1}^{\chi_{\mathsf{conf}}}|\mathcal{S}_k| } \right)}\label{eq:delay_jensen}
\end{align}
Now, note that $\sum_{k=1}^{\chi_{\mathsf{conf}}} |\mathcal{S}_k|$ is equal to the total number of vertices in the conflict graph, or the edges in the communication graph; i.e.,
\begin{align*}
\sum_{k=1}^{\chi_{\mathsf{conf}}} |\mathcal{S}_k| = |\mathcal{V}_{\mathsf{conf}}| = |\mathcal{E}_{\mathsf{comm}}|.
\end{align*}
Proposition A.1 in \cite{muller2008two} suggests that the average degree of a 2-dimensional random geometric graph with $n$ nodes dropped uniformly at random within a circular area of radius $R$ and a threshold distance $r$ asymptotically converges to $n\left(\frac{r}{R}\right)^2$. Therefore, we have
\begin{align*}
\sum_{k=1}^{\chi_{\mathsf{conf}}} |\mathcal{S}_k| = |\mathcal{E}_{\mathsf{comm}}| \overset{n\uparrow}{\longrightarrow} \frac{1}{2} \cdot n\cdot n \left(\frac{D_{\mathsf{comm}}}{R}\right)^2,
\end{align*}
which together with \eqref{eq:delay_jensen} leads to
\begin{align*}
\delta &< \frac{\chi_{\mathsf{conf}}}{\log_2\left( 1 + \chi_{\mathsf{conf}} \frac{\sqrt{\gamma D_{\mathsf{comm}}^{-\alpha}}}{ \frac{n^2 D_{\mathsf{comm}}^2}{2R^2}} \right)}\\
&= \frac{\chi_{\mathsf{conf}}}{\log_2\left( 1 + 2\sqrt{\gamma}R^2\chi_{\mathsf{conf}} \frac{ D_{\mathsf{comm}}^{-\frac{\alpha}{2} - 2} }{n^2} \right)} \\
&\overset{\eqref{eq:def_Dcomm}}{=} \frac{\chi_{\mathsf{conf}}}{\log_2\left( 1 + 2\sqrt{\gamma}R^{-\frac{\alpha}{2}} n^{\frac{\alpha\beta}{2} + 2\beta - 2} \chi_{\mathsf{conf}} \right)}\numberthis\label{eq:mon_bound_delta_chi}
\end{align*}

It is not hard to verify that the bound in~\eqref{eq:mon_bound_delta_chi} is a monotonically increasing function of $\chi_{\mathsf{conf}}$ (see Appendix~\ref{appx:monotone}). Therefore, we can invoke Lemma~\ref{lem:bound_chrom} to upper bound~\eqref{eq:mon_bound_delta_chi} as in~\eqref{eq:main_thm}, hence completing the proof of Theorem~\ref{thm:main}.

\appendices

\section{Proof of Concavity of $g(x)$ in \eqref{eq:f_def} for $x>0$}\label{appx:concave}

Letting $M=\sqrt{\gamma D_{\mathsf{comm}}^{-\alpha}}$, we can write the first derivative of $g$ as
\begin{align*}
\frac{\partial g}{\partial x} &= -\frac{\frac{\partial \log_2\left( 1 + \frac{M}{x}\right)}{\partial x}}{\left[\log_2\left( 1 + \frac{M}{x}\right)\right]^2} \\
&= \frac{M\log 2}{x^2 \left( 1 + \frac{M}{x}\right)\left[\log\left( 1 + \frac{M}{x}\right)\right]^2},
\end{align*}
which leads to the second derivative of $g$ as
\begin{align}\label{eq:fsec_1}
\frac{\partial^2 g}{\partial x^2} &= -\frac{(M\log 2) \frac{\partial \left[ x^2 \left( 1 + \frac{M}{x}\right)\left[\log\left( 1 + \frac{M}{x}\right)\right]^2\right]}{\partial x}}{x^4 \left( 1 + \frac{M}{x}\right)^2\left[\log\left( 1 + \frac{M}{x}\right)\right]^4}.
\end{align}

We can write the derivative in the numerator of \eqref{eq:fsec_1} as
\begin{align*}
&\frac{\partial \left[ x^2 \left( 1 + \frac{M}{x}\right)\left[\log\left( 1 + \frac{M}{x}\right)\right]^2\right]}{\partial x} \\
&\quad= \frac{\partial \left( x^2 + Mx\right)}{\partial x}\left[\log\left( 1 + \frac{M}{x}\right)\right]^2 \\
&\qquad+ \frac{\partial \left[\log\left( 1 + \frac{M}{x}\right)\right]^2}{\partial x} \left( x^2 + Mx\right) \\
&\quad = (2x+M) \left[\log\left( 1 + \frac{M}{x}\right)\right]^2 \\
&\qquad- 2M \log\left( 1 + \frac{M}{x}\right)\\
&\ =\log\left( 1 + \frac{M}{x}\right)\left[(2x+M) \log\left( 1 + \frac{M}{x}\right)-2M\right].\numberthis\label{eq:2ndder_g_interm}
\end{align*}

Now, consider the function
\begin{align}\label{eq:h_def}
h(y) := \log(1+y) - \frac{2y}{2+y}.
\end{align}
It is easy to show that this function is non-negative for $y\geq0$. This is because $h(0)=0$, and
\begin{align*}
\frac{dh}{dy} &= \frac{1}{1+y} - \frac{2(2+y)-2y}{(2+y)^2}\\
&=\frac{(2+y)^2 - 4(1+y)}{(1+y)(2+y)^2}\\
&=\frac{y^2}{(1+y)(2+y)^2} \geq 0.
\end{align*}
Plugging in $y=\frac{M}{x}$, we can rewrite~\eqref{eq:2ndder_g_interm} as
\begin{align*}
&\frac{\partial \left[ x^2 \left( 1 + \frac{M}{x}\right)\left[\log\left( 1 + \frac{M}{x}\right)\right]^2\right]}{\partial x} \\
&\quad= \log\left( 1 + \frac{M}{x}\right)\cdot(2x+M) \cdot h\left(\frac{M}{x}\right) \geq 0,
\end{align*}
which, together with the fact that the rest of the terms in~\eqref{eq:fsec_1} are negative, completes the proof.

\section{Proof of Monotonicity of the Bound in~\eqref{eq:mon_bound_delta_chi}}\label{appx:monotone}
Let us rewrite the bound in~\eqref{eq:mon_bound_delta_chi} as $s(\chi_{\mathsf{conf}})$, where $s(\cdot)$ is defined as
\begin{align}
s(x) := \frac{x}{\log_2\left( 1 + Cx \right)},
\end{align}
with $C = 2\sqrt{\gamma}R^{-\frac{\alpha}{2}} n^{\frac{\alpha\beta}{2} + 2\beta - 2}$. We can then write the first derivative of $s$ as
\begin{align*}
\frac{\partial s}{\partial x} &= \frac{\log 2}{\left[\log\left( 1 + Cx \right)\right]^2} \left[\log(1+Cx) - \frac{Cx}{1+Cx}\right]\\
&\overset{\eqref{eq:h_def}}{=} \frac{\log 2}{\left[\log\left( 1 + Cx \right)\right]^2} \left[h(Cx) + \frac{2Cx}{2+Cx} - \frac{Cx}{1+Cx}\right] \\
&=\frac{\log 2}{\left[\log\left( 1 + Cx \right)\right]^2} \left[h(Cx) + \frac{C^2x^2}{(1+Cx)(2+Cx)}\right] \\
&\geq 0,
\end{align*}
since $h(Cx) \geq 0$ as shown in Appendix~\ref{appx:concave}. This complete the proof.

\balance
\bibliographystyle{IEEEtran}
\bibliography{navid}

\end{document}